\begin{document}

\title{Data-driven dual-loop control for platooning mixed human-driven and automated vehicles}

\author{Jianglin Lan} 

\authormark{Jianglin Lan}

\address{\orgdiv{James Watt School of Engineering}, \orgname{University of Glasgow}, \orgaddress{\state{Glasgow G12 8QQ}, \country{United Kingdom}}}

\corres{Jianglin Lan, James Watt School of Engineering, University of Glasgow, Glasgow G12 8QQ, United Kingdom. \email{Jianglin.Lan@glasgow.ac.uk}}	

\fundingInfo{Leverhulme Trust, Grant/Award Number: ECF-2021-517.}

\abstract[Summary]{
This paper considers controlling automated vehicles (AVs) to form a platoon with human-driven vehicles (HVs) under consideration of unknown HV model parameters and propulsion time constants.
The proposed design is a data-driven dual-loop control strategy for the ego AVs, where the inner loop controller ensures platoon stability and the outer loop controller keeps a safe inter-vehicular spacing under control input limits.
The inner loop controller is a constant-gain state feedback controller solved from a semidefinite program (SDP) using the online collected data of platooning errors. 
The outer loop is a model predictive control (MPC) that embeds a data-driven internal model to predict the future platooning error evolution. 
The proposed design is evaluated on a mixed platoon with a representative aggressive reference velocity profile, the SFTP-US06 Drive Cycle. The results confirm efficacy of the design and its advantages over the existing single loop data-driven MPC in terms of platoon stability and computational cost. 
}

\keywords{Connected and automated vehicle, data-driven control, mixed traffic, model predictive control, vehicle platoon}

%\jnlcitation{\cname{%
%		\author{J. Lan}}, 
%	\ctitle{Data-driven dual-loop control for platooning mixed human-driven and automated vehicles}, \cjournal{IET Intell. Transp. Syst.}. \cvol{2023;00:1--19}.}

\maketitle

\section{Introduction}
% review of pure AV traffic
Cooperative adaptive cruise control (CACC) powered by vehicle-to-vehicle (V2V) communications enables the vehicles to form a platoon, i.e., travelling in line at the same speed with safe inter-vehicular spacing. The deployment of vehicle platoons can potentially improve traffic capacity, safety and fuel consumption \cite{van2006impact,jia2015survey}. 
Currently there are a large volume of effective CACC strategies for pure automated vehicles (AVs) platoons \cite{li2017dynamical,guanetti2018control,rahman2018longitudinal}. 
However, in the near future it is most likely to see the co-existence of AVs and human-driven vehicles (HVs) driving on the same roads \cite{di2020survey}. 
Compared to the control of pure AVs platoons, CACC for mixed AVs and HVs platoons will experience the additional challenges:
(i) The mixed vehicle platoon is not fully controllable as the HVs are controlled by human drivers instead of automatic algorithms like the AVs, and  
(ii) human driving behaviours have intrinsic uncertainty and unpredictability, easily causing traffic congestion \cite{sugiyama2008traffic} and oscillation \cite{qin2019experimental}.  
However, the existing platooning strategies for pure AVs are not for this purpose, raising the need for new methods tailored for mixed vehicle platoons.

% review of mixed traffic
A potential way to address the above mentioned challenges is 
taking into account of the HVs behaviours when controlling the AVs so as to ensure safety and robustness of the mixed vehicle platoon. 
The optimal velocity (OV) model \cite{orosz2010traffic} has been shown to be capable of characterising how a HV follows its preceding vehicle in a mixed platoon  \cite{zheng2020smoothing,giammarino2020traffic,hajdu2019robust,chen2020mixed,wang2020leading,feng2019robust}. The OV model has been used to design CACC strategies for mixed platoons in several works.  
The works \cite{zheng2020smoothing,giammarino2020traffic} propose state feedback control for an AV to smooth the mixed traffic flow in ring roads;
Robust control is adopted for a more generalised mixed platoon \cite{hajdu2019robust}; 
Optimal control is used for mixed platoons on straight roads \cite{wang2020leading} or at signalised intersections \cite{chen2020mixed};
Tube-based model predictive control (MPC) has also been developed for mixed platoons under vehicle physical constraints \cite{feng2019robust}.
The above works rely on a common assumption of known OV models, which is too restrictive as accurately modelling HVs behaviours is difficult \cite{di2020survey}. A CACC design not relying on known HV parameters is thus more appealing.

By leveraging the adaptive dynamic programming theory \cite{lewis2012reinforcement}, 
data-driven CACC have been designed for mixed vehicle platoons under consideration of control constraints \cite{gao2016data} and reaction delays in HVs \cite{huang2020learning}.
However, their designs cannot ensure platoon safety and are not robust to changes in the leader velocity and uncertainties in the HVs behaviours. 
Data-driven MPC methods have been proposed in the literature \cite{Mahbub+23,Zhan+22,Wang+23i,Wang+22,Li+23p,Lan+21d} to overcome these drawbacks.
A data-driven receding horizon control is proposed in the work \cite{Mahbub+23} by using recursive least squares to estimate the HVs models. However, their design aims for a special type of mixed vehicle platoons where an AV is controlled to lead a set of HVs. 	
The work \cite{Zhan+22} uses the Koopman operator theory with neural networks to build a linear vehicle platoon model for centralised and distributed MPC designs. However, their method needs sufficient offline platoon data for neural network training and online model update when the modelling accuracy does not meet the prescribed threshold.
The work \cite{Wang+23i} proposes a MPC design based on a HV model that combines a nominal model (a second-order transfer function) and a Gaussian process model. Their work requires offline dataset to train the Gaussian process model and focuses on the special mixed platoon formation with only one HV driving behind a set of AVs. 
The work \cite{Wang+22} proposes the DeeP-LCC method, where the MPC design relies on the platoon behaviour predicted by a Hankel matrix built from collected platoon data. However, their work assumes that the future prediction of the disturbance (i.e., the lead vehicle velocity changes) is zero, which is restrictive. This Hankel matrix-based MPC design is further developed in the work \cite{Li+23p} to optimise the holistic energy consumption of the mixed platoon.
The work \cite{Lan+21d} uses zonotope reachability analysis to generate an over-approximated prediction of the future mixed platoon behaviours. However, the use of reachability set as prediction is conservative and the involvement of matrix zonotope operations makes the MPC optimisation problems to be solved online computationally expensive.

%Considering the above background, 
This paper aims for a new data-driven CACC strategy for mixed vehicle platoons with unknown parameters for HV models and unknown propulsion time constants for all vehicles. This paper mainly contributes in three aspects:
\begin{enumerate}
	\item[1)] The proposed CACC strategy has a dual-loop structure, where the inner loop is a state feedback controller ensuring asymptotic stability of the platoon and the outer loop is a MPC controller ensuring satisfaction of the safety/input constraints and robustness against leader velocity changes. The outer loop incorporates an internal model that is updated efficiently in real-time to capture the time-varying HVs' behaviours and the model disturbances, thus addressing the limitations in the existing designs \cite{Zhan+22,Lan+21d,Wang+22}.
	\item[2)] Data-driven methods are proposed to design both the inner and outer loop controllers. The inner loop control design integrates the recent advancement of data-driven control theory \cite{DeTesi19,Van+20a,Van+20b,Persis+22} into a dual-loop system structure, where the gain is solved from a data-driven semidefinite program (SDP) and to be refined by the MPC policy for constraint satisfaction. The MPC design takes inspiration from the existing concept of offset free MPC for known system models \cite{Maeder+09,PannocchiaBemporad07}, but with a newly data-driven design strategy. 
	\item[3)] Simulation results of a mixed platoon under a representative realistic driving scenario, the SFTP-US06 Drive Cycle, show that the proposed design is computationally cheaper and can maintain a more stable platoon than the existing data-driven MPC \cite{Lan+21d}. 
\end{enumerate}

The rest of this paper is structured as follows:
Section \ref{sec:problem statement} describes the design problem. 
Section \ref{sec: inner loop} presents the data-driven inner loop control design, followed by the outer loop control design in Section~\ref{sec: outer loop}. Section \ref{sec:simulation} reports the simulation results and Section \ref{sec:conclusion} draws the conclusions.

\textbf{Notations}: 
The symbol $[a,b]$ denotes the sequence of integers from $a$ to 
$b$. $\mathrm{diag}(V_1, \cdots, V_n)$ denotes a block diagonal matrix with $V_1, \cdots, V_n$ being the main diagonals. $\mathbf{1}_n$ is a $n$-dimensional column of ones. 
$I$ and $\mathbf{0}$ are identity and zero matrices, respectively, whose dimensions are known from the context and omitted unless they are necessary to be given. $\star$ indicates symmetry in a block matrix. s.t. is the abbreviation for subject to.

\section{Platoon modelling and control problem} \label{sec:problem statement}
This paper studies the generic mixed vehicle platoon with the formation illustrated in Figure \ref{fig1}. The platoon consists of $n+1$ vehicles, two AVs at the very front (AV 0) and rear (AV $n$) and $n-1$ HVs elsewhere. All the vehicles can communicate with each other through the V2V wireless network. The assumption of V2V communications between HVs and AVs has been commonly made in the existing CACC for mixed platoons, including the model-based control methods \cite{zheng2020smoothing,hajdu2019robust,chen2020mixed,wang2020leading,feng2019robust} and the data-driven control methods \cite{gao2016data,huang2020learning,Mahbub+23,Zhan+22,Wang+22,Lan+21d}. 
Setting an AV as the leader is necessary to ensure formation of the mixed platoon. The leader will also be used to generate platooning data for the proposed data-driven CACC for AV $n$. As shown in the work \cite{Lan+21d}, the CACC to be designed for the mixed platoon in Figure \ref{fig1} can be directly used for general mixed vehicle platoons by splitting them into several sub-platoons with the formation in Figure \ref{fig1}.

%%%%%%%%%%%%%%%%%%
\begin{figure}[t]
	\centering
	\includegraphics[width=\columnwidth]{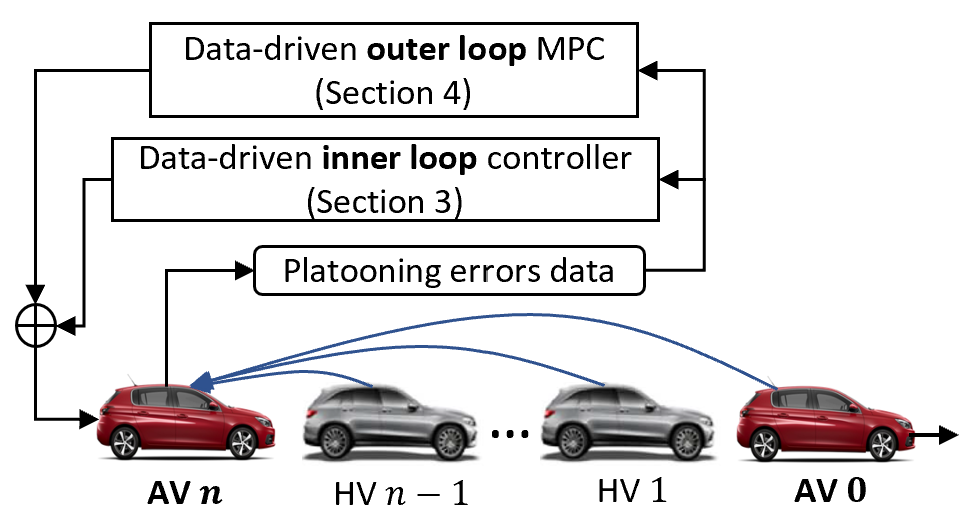} 
	\vspace{-3mm}
	\caption{The mixed platoon formation and proposed dual-loop CACC framework.}
	\label{fig1}
\end{figure}
%%%%%%%%%%%%%%%%%%%

Similar to the exiting vehicle platooning literature, this paper focuses on designing the longitudinal control for AV $n$. 
The $i$-th AV, $i = 0, n$, has the longitudinal dynamic model
\begin{subequations}\label{virtual leader dynamics}
	\begin{align}
		\dot{p}_i &= v_i, \\
		\dot{v}_i &= a_i, \\
		\dot{a}_i &= \frac{1}{\tau_i} (u_i - a_i),
	\end{align}
\end{subequations}
where $p_i$, $v_i$, $a_i$ and $u_i$ represent the position, velocity, acceleration and control command of the vehicle, respectively. $\tau_i$ denote the propulsion time constants, where $\tau_0$ of the leader is unknown but $\tau_n$ of the ego AV is known.

The leader is controlled by an existing controller $u_0$ such as a MPC controller \cite{lan2020min} and can follow the given longitudinal velocity reference $v^*$ well. 
The AV $n$ is controlled by $u_n$ to follow the leader and maintain a safe gap $d_\mathrm{safe}$ from HV $n-1$. For simplification, a constant $d_\mathrm{safe}$ is used in this paper but the proposed design is extendable to the case of time-varying $d_\mathrm{safe}$.
The platooning errors of AV $n$ are defined as $\tilde{h}_n = h_n - d_\mathrm{safe}$ and $\tilde{v}_n = v_n - v^*$, where $h_n = p_{n-1} - p_n$ is the gap between vehicles $n-1$ and $n$. 
The platooning error dynamics of AV $n$ are derived from \eqref{virtual leader dynamics} and given as
\begin{equation} \label{AV error sys}
\dot{x}_n = A_n x_n + B_n u_n + E_n x_{n-1},
\end{equation}
where 
$x_n = [\tilde{h}_n ~ \tilde{v}_n ~ a_n]^\top$, 
$x_{n-1} = [\tilde{h}_{n-1} ~ \tilde{v}_{n-1} ~ a_{n-1}	]^\top$, and
\begin{align}
A_n = \begin{bmatrix}
	0 & -1 & 0 \\ 0 & 0 & 1 \\ 0 & 0 & -\frac{1}{\tau_n}
\end{bmatrix}, ~
B_n = \begin{bmatrix}
	0 \\ 0 \\ \frac{1}{\tau_n}
\end{bmatrix},~
E_n = \begin{bmatrix}
	0 & 1 & 0  \\ 0 & 0 & 0 \\ 0 & 0 & 0
\end{bmatrix}. \nonumber
\end{align}

The way that the HV $i$, $i \in [1,n-1]$, follows its preceding vehicle $i-1$ in the longitudinal direction is characterised by the model below (which is the standard OV model \cite{orosz2010traffic} augmented with actuator dynamics):
\begin{subequations} \label{HV dynamics}
	\begin{align}
		\dot{h}_i &= v_{i-1} - v_i, \\
		\dot{v}_i &= a_i, \\
		\dot{a}_i &= \frac{1}{\tau_i} \left[ \alpha_i (V(h_i) - v_i) + \beta_i (v_{i-1} - v_i) - a_i \right],
	\end{align}
\end{subequations}
where $h_i = p_{i-1} - p_i$. The scalars $\alpha_i$, $\beta_i$ and $\tau_i$ are the 
unknown headway gain, relative velocity gain and propulsion time constant, respectively. $V(h_i)$ is the desired velocity of HV $i$ which is represented as a piecewise function as follows: 
\begin{eqnarray} \label{range policy}
	V(h_i) = \left \{
	\begin{array}{cl}
		0, & h_i \leq h_\text{s} \\
		\frac{v_\text{max}}{2} \left[ 1 - \cos\left(\pi \frac{h_i - h_\text{s}}{h_\text{g} - h_\text{s}} \right) \right], & h_\text{s} < h_i < h_\text{g} \\
		v_\text{max}, & h_i \geq h_\text{g} 
	\end{array} \right.
\end{eqnarray}
with $h_\text{s}$ and $h_\text{g}$ being the smallest and largest spacing thresholds. 
This paper focuses on the case that a platoon is able to be established and thus $V(h_i)$ takes values under $h_\text{s} < h_i < h_\text{g}$. 

Suppose the reference velocity $v^* \leq v_\text{max}$ and AV 0 can track it accurately, then the HVs ultimately reach the set point $(h^*,v^*)$, where $h^*$ is computed from \eqref{range policy} and given as $h^* = \frac{h_g - h_s}{\pi} \cos^{-1}\left( 1 - \frac{2 v^*}{v_{\max}}\right) + h_s$. 
The platooning errors of HV $i$ are defined as $\tilde{h}_i = h_i - h^*$ and $\tilde{v}_i = v_i - v^*$, $i \in [1,n-1]$.
Linearising the HVs models around the set point $(h^*,v^*)$ gives the platooning error dynamics
\begin{equation} \label{HV error sys}
\dot{x}_i = A_i x_i + E_i x_{i-1},
\end{equation}
where
\begin{align}
&x_i = \begin{bmatrix}
	\tilde{h}_i \\ \tilde{v}_i \\ a_i
\end{bmatrix},~
A_i = \begin{bmatrix}
	0 & -1 & 0\\
	0 & 0 & 1 \\
	\frac{\bar{\alpha}_i}{\tau_i} & \frac{-\bar{\beta}_i}{\tau_i} & \frac{-1}{\tau_i}
\end{bmatrix},~
E_i = \begin{bmatrix}
	0 & 1 & 0 \\ 0& 0 & 0 \\ 0 & \frac{\bar{c}_i}{\tau_i} & 0
\end{bmatrix}, \nonumber\\
&\bar{\alpha}_i = \alpha_i \left.\frac{\partial V(h_i)}{\partial h_i} \right|_{h_i = h^*}, ~\bar{\beta}_i = \alpha_i + \beta_i, ~\bar{c}_i = \beta_i. \nonumber
\end{align}

Define $x = [x_1^\top, \cdots, x_n^\top]^\top$ as the overall platooning error vector, $u = u_n$ as the control input, and $w = [w_1, w_2]^\top$ as the disturbance, where $w_1 = v_0 - v^*$ represents the unknown deviation of the leader velocity $v_0$ from the reference velocity $v^*$ and $w_2 = \bar{c}_1 w_1/\tau_1$. 
The overall platooning error system is derived from \eqref{AV error sys} and \eqref{HV error sys} and given as
\begin{equation} \label{platoon error sys0}
	\dot{x} = A_c x + B_c u + D_c w
\end{equation}
with the system matrices
\begin{align}
	A_c &= \begin{bmatrix}
		A_1 &  &  &  \\
		E_2 & A_2 &  &  \\
		&  \ddots & \ddots & \\
		&         &   E_n  & A_n  
	\end{bmatrix}, 
	B_c = \begin{bmatrix}
		B_1 \\ B_2 \\ \vdots \\ B_n 
	\end{bmatrix}, 
	D_c = \begin{bmatrix}
		D_1 \\ D_2 \\ \vdots \\ D_n 
	\end{bmatrix}, \nonumber\\
	D_1 &= \begin{bmatrix} 1 & 0 \\ 0 & 0 \\ 0 & 1 \end{bmatrix}, B_i = \mathbf{0}, i \in [1,n-1], ~
	D_i = \mathbf{0}, i \in [2,n].
	\nonumber
\end{align} 

The forward Euler method is used to derive the following discrete-time system of \eqref{platoon error sys0} under a sampling time $t_s$: 
\begin{equation}\label{eq:sys for design}
x(k+1) = A x(k) + B u(k) + D w(k) 
\end{equation}
with the system matrices $A = I_{n_x} + t_s A_c$, $B = t_s B_c$ and $D = t_s D_c$. 
The dimensions of $x(k)$, $u(k)$, and $w(k)$ are denoted as $n_x = 3n$, $n_u = 1$, and $n_w = 2$, respectively. 

Due to the intrinsic uncertainty and randomness in human driving behaviours, it is difficult to known exactly the parameters $\alpha_i$ and $\beta_i$ in the OV model \eqref{HV error sys}. The propulsion time constants $\tau_i$, $i \in [1,n-1]$, are also unknown. Hence, the matrix $A$ in \eqref{eq:sys for design} is unknown and the existing model-based CACC strategies are not fit for the purpose. 

This paper will propose a data-driven dual-loop CACC strategy as outlined in Figure \ref{fig1} to obtain the controller $u(k)$ for AV $n$ based on \eqref{eq:sys for design}. The control objectives are as follows:
\begin{enumerate}
	\item[(i)] Stability of the mixed platoon, \textit{\textit{i.e.}}, steering the platooning error vector $x(k)$ to be zero. 
	%%%%%%%%%%%%%%%%%%%%%%%%
	\item[(ii)] Satisfaction of the following input and safety constraints: 
	\begin{equation} \label{control obj 2}
		u(k) \in \mathbb{U}, ~ C x(k) \in \mathbb{X},	
	\end{equation}
	where $C = [\mathbf{0}_{3 \times 3(n-1)}, ~ I_3]$. The constraint sets are defined as $\mathbb{U} = \{ u \in \mathbb{R} \mid |u| \leq u_\text{max} \}$ and 
	$\mathbb{X} = \{ x \in \mathbb{R}^{n_x} \mid |x| \leq x_\text{max} \}$ with the given bounds $x_\text{max} = [\tilde{h}_\text{max}~ \tilde{v}_\text{max}~ u_\text{max}]^\top$.  
\end{enumerate}
The inner loop constant-gain state feedback controller realises objective (i) and the outer loop MPC refines the inner loop controller to realise both objectives (i) and (ii).

The proposed design needs the following assumption:
\begin{assumption}\label{assume:disturbance}
	$|w| \leq \delta \times \mathbf{1}_{n_w}$ for a known scalar $\delta > 0$.	
\end{assumption} 

This assumption is reasonable because $w$ depends on $v_0$, $v^*$, $\beta_1$ and $\tau_1$ which are all 
physically bounded. 
Suppose that 
$|a_0| \leq u_{\max}$, 
$\underline{\beta}_1 \leq \beta_1 \leq \overline{\beta}_1$ and $\underline{\tau}_1 \leq \tau_1 \leq \overline{\tau}_1$, then one has
$$
|w_1| \leq u_{\max} t_s := \delta_1, ~
|w_2| = |\beta_1 w_1/\tau_1 | \leq \bar{\beta}_1 \delta_1/\underline{\tau}_1  := \delta_2. \nonumber
$$
The value of $\delta$ can be chosen as $\delta = \max\{ \delta_1, \delta_2\}$.

\section{Data-driven Inner Loop Control Design}\label{sec: inner loop}
This section develops a state feedback constant-gain controller based on the existing data-driven control theory \cite{DeTesi19,Van+20a,Van+20b,Persis+22}. Different from the literature, the obtained controller serves as the inner loop in the proposed dual-loop setting and lays the foundation for developing the data-driven 
outer loop MPC policy in next section to enforce the input/safety constraints. 

The inner loop control design begins with collecting the data of platooning errors.
At the start of forming the platoon (before the data-driven controller is designed and applied to AV $n$), AV $n$ uses the classic ACC controller \cite{Shladover+12} to maintain safety. A total number of $T$ data samples of platooning errors and control signals are collected and grouped as follows:
\begin{subequations}\label{eq:data seq1}	
\begin{align}
\setlength{\parindent}{0in}		
U_0 &= [u(0), u(1), \cdots, u(T-1)] \in \mathbb{R}^{n_u \times T}, \\
X_0 &= [x(0), x(1), \cdots, x(T-1)] \in \mathbb{R}^{n_x \times T}, \\
X_1 &= [x(1), x(2), \cdots, x(T)] \in \mathbb{R}^{n_x \times T}. 
\end{align}	
\end{subequations}
The data sequence $X_0$ is required to be of full row rank \cite{Persis+22}, which is necessary to ensure feasibility of the proposed data-driven inner loop controller design.

Let the sequence of unknown disturbance be
\begin{equation}\label{eq:data seq2}
W_0 = [w(0),w(1),\dots,w(T-1)] \in \mathbb{R}^{n_w \times T}.
\end{equation}

By using \eqref{eq:data seq1} and \eqref{eq:data seq2}, the platooning error system \eqref{eq:sys for design} can be replaced by its data-based alternative given in 
Lemma~\ref{lemma:data-based closed-loop sys}.
\begin{lemma}\label{lemma:data-based closed-loop sys}
If there are matrices $K \in \mathbb{R}^{n_u \times n_x}$ and $G \in \mathbb{R}^{T \times n_x}$ satisfying the equation
\begin{equation}\label{eq:lemma data closed 1}
\begin{bmatrix}
K \\ I_{n_x}
\end{bmatrix}	
=
\begin{bmatrix}
U_0 \\ X_0
\end{bmatrix}	
G,
\end{equation}
then applying the controller $u(k) = K x(k)$ to \eqref{eq:sys for design}  
results in the closed-loop platooning error system
\begin{equation}\label{eq:lemma data closed 2}
x(k+1) = \bar{A} x(k) + D w(k),
\end{equation}
where $\bar{A} = X_1 G - D W_0 G$.
\end{lemma}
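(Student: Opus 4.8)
The plan is to exploit the single algebraic identity that the collected data must satisfy by virtue of the dynamics \eqref{eq:sys for design}: stacking $x(k+1) = A x(k) + B u(k) + D w(k)$ over $k = 0, \ldots, T-1$ gives
\[
X_1 = A X_0 + B U_0 + D W_0 .
\]
This holds columnwise by construction of $X_0$, $X_1$, $U_0$, $W_0$ in \eqref{eq:data seq1}--\eqref{eq:data seq2}, whatever the (unknown) values of $A$ and $W_0$ are.

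Next I would split the hypothesis \eqref{eq:lemma data closed 1} into its two blocks: the top block reads $K = U_0 G$ and the bottom block reads $X_0 G = I_{n_x}$. Substituting the feedback $u(k) = K x(k)$ into \eqref{eq:sys for design} yields the closed loop $x(k+1) = (A + B K) x(k) + D w(k)$, so it only remains to identify $A + B K$ with the claimed $\bar A$. That computation is then immediate: combining the stacked identity with the two block relations,
\[
X_1 G - D W_0 G = (A X_0 + B U_0 + D W_0) G - D W_0 G = A (X_0 G) + B (U_0 G) = A + B K .
\]
Hence the closed-loop matrix equals $\bar A = X_1 G - D W_0 G$, which is \eqref{eq:lemma data closed 2}.

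There is no genuine obstacle in this lemma; the point worth flagging is that $\bar A$ still carries the unknown term $D W_0 G$, so \eqref{eq:lemma data closed 2} is a re-parameterisation of the closed loop rather than a model identification — the disturbance effect is retained and is later handled via Assumption~\ref{assume:disturbance} and the subsequent SDP. The full-row-rank condition on $X_0$ is what makes the linear equation \eqref{eq:lemma data closed 1} solvable for $(K, G)$ in the first place, but it plays no role in the implication proved here.
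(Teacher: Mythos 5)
Your proposal is correct and follows essentially the same route as the paper's own proof: both combine the stacked data identity $X_1 = A X_0 + B U_0 + D W_0$ with the block relations $K = U_0 G$ and $X_0 G = I_{n_x}$ from \eqref{eq:lemma data closed 1} to rewrite the closed-loop matrix $A + BK$ as $X_1 G - D W_0 G$. The only cosmetic difference is that the paper substitutes $[B~\,A]\,[K^\top ~ I_{n_x}]^\top = (A X_0 + B U_0)G$ directly, whereas you first form $A+BK$ and then identify it with $\bar A$; the content is identical.
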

\begin{proof}
Applying $u(k) = K x(k)$ to \eqref{eq:sys for design} and using 
\eqref{eq:lemma data closed 1} yields
\begin{align}\label{eq:lemma data closed pf1}
x(k+1) &= [B ~ A] \begin{bmatrix} K \\ I_{n_x} \end{bmatrix} x(k) + D w(k) \nonumber\\
%&= [B ~ A] \begin{bmatrix} U_0 \\ X_0 \end{bmatrix} G x(k) + D w(k) \nonumber\\
&= (A X_0 + B U_0) G x(k) + D w(k).
\end{align}
Since $U_0$, $X_0$, $X_1$ and $D_0$ satisfy $x(s+1) = A x(s) + B u(s) + D w(s), ~s \in [0,T-1]$, the relation $X_1 = A X_0 + B U_0 + D W_0$ holds. Applying this to \eqref{eq:lemma data closed pf1} gives
\begin{equation}\label{eq:lemma data closed pf2}
x(k+1) = (X_1 - D W_0) G x(k) + D w(k).
\end{equation}
This immediately leads to \eqref{eq:lemma data closed 2} by defining $\bar{A} = X_1 G - D W_0 G$.
\end{proof}	
Since $X_0$ has full row rank $n_x$, there always exists a matrix $G$ such that $I_{n_x} = X_0 G$. Hence, the matrix $K = U_0 G$ always exists.
Based on Lemma~\ref{lemma:data-based closed-loop sys}, 
the proposed data-driven control is stated as below.
\begin{theorem}\label{thm:control design}
Under Assumption \ref{assume:disturbance},
the platooning error system \eqref{eq:sys for design} is asymptotically stable when $D w(k) = 0$ by implementing the controller $u(k) = K x(k)$ with
\begin{equation}\label{controller:thm}
	K = U_0 Y P^{-1},
\end{equation} 
if there exist matrices $P \in \mathbb{R}^{n_x \times n_x}$ and $Y \in \mathbb{R}^{T \times 
n_x}$ and a scalar $\gamma$ such that the following SDP problem is feasible:

\begin{subequations}\label{op:control design}	
\begin{align}
%&\hspace{1.6cm}  \underset{P, Y, \gamma}{\min}~-\gamma \nonumber\\
\label{const:control 1}
%&\text{subject~to:} ~ 
&\hspace{1.6cm} P \succ 0, ~ \gamma > 0, \\
\label{const:control 2}
&\hspace{1.6cm} X_0 Y = P, \\
\label{const:control 3}
&
\begin{bmatrix}
P - \gamma I_{n_x} & Y^\top X_1^\top & Y^\top & \mathbf{0}_{n_x \times n_w}\\
\star & P & \mathbf{0}_{n_w \times T} & D \Delta \\
\star & \star & \epsilon I_{T} & \mathbf{0}_{T \times n_w} \\
\star & \star & \star & \epsilon^{-1} I_{n_w}
\end{bmatrix} 
\succ 0,
\end{align}
\end{subequations}	 
where $\epsilon > 0$ is a user-specified scalar.
\end{theorem}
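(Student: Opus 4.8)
\noindent\emph{Proof sketch.} The plan is to convert feasibility of the SDP \eqref{op:control design} into a quadratic Lyapunov certificate for the closed loop produced by Lemma~\ref{lemma:data-based closed-loop sys}, handling the unknown noise data robustly through the scalar $\epsilon$.

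First I would choose $G = Y P^{-1}$, which is well defined because $P \succ 0$ by \eqref{const:control 1}. Constraint \eqref{const:control 2} then gives $X_0 G = X_0 Y P^{-1} = P P^{-1} = I_{n_x}$, while \eqref{controller:thm} reads $K = U_0 Y P^{-1} = U_0 G$, so $\begin{bmatrix} K \\ I_{n_x}\end{bmatrix} = \begin{bmatrix} U_0 \\ X_0\end{bmatrix} G$ and Lemma~\ref{lemma:data-based closed-loop sys} applies: the closed loop is $x(k+1) = \bar{A} x(k) + D w(k)$ with $\bar{A} = (X_1 - D W_0) Y P^{-1}$. Since the realised (but unknown) noise sequence $W_0$ is admissible under Assumption~\ref{assume:disturbance}, it suffices to prove that $\bar{A}(\hat{W}) := (X_1 - D\hat{W}) Y P^{-1}$ is Schur stable for \emph{every} admissible $\hat{W}$; then the true $\bar{A}$ is Schur and, for $D w(k) = 0$, the closed loop $x(k+1) = \bar{A} x(k)$ is asymptotically stable.

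Next I would use the Lyapunov function $V(x) = x^\top P^{-1} x$ (positive definite since $P \succ 0$), whose decrease along $x(k+1) = \bar{A}(\hat{W}) x(k)$ is implied by $\bar{A}(\hat{W})^\top P^{-1} \bar{A}(\hat{W}) - P^{-1} \prec 0$. Congruence with $P$ and a Schur complement with respect to the block $P \succ 0$ recast this as
\[
\begin{bmatrix} P & Y^\top (X_1 - D\hat{W})^\top \\ (X_1 - D\hat{W}) Y & P \end{bmatrix} \succ 0 .
\]
Writing the left-hand side as $\Xi - \mathcal{H}\hat{W}\mathcal{E} - \mathcal{E}^\top \hat{W}^\top \mathcal{H}^\top$ with $\Xi = \begin{bmatrix} P & Y^\top X_1^\top \\ X_1 Y & P \end{bmatrix}$, $\mathcal{H} = \begin{bmatrix} \mathbf{0} \\ D \end{bmatrix}$ and $\mathcal{E} = \begin{bmatrix} Y & \mathbf{0} \end{bmatrix}$, I would invoke Young's inequality (equivalently Petersen's lemma) with $\epsilon > 0$, namely $\mathcal{H}\hat{W}\mathcal{E} + \mathcal{E}^\top\hat{W}^\top\mathcal{H}^\top \preceq \epsilon\,\mathcal{H}\hat{W}\hat{W}^\top\mathcal{H}^\top + \epsilon^{-1}\mathcal{E}^\top\mathcal{E}$, together with the disturbance bound $\hat{W}\hat{W}^\top \preceq \Delta\Delta^\top$ implied by Assumption~\ref{assume:disturbance}, where $\Delta$ is the matrix appearing in \eqref{const:control 3}. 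A sufficient condition then reads
\[
\begin{bmatrix} P - \epsilon^{-1} Y^\top Y & Y^\top X_1^\top \\ X_1 Y & P - \epsilon D\Delta\Delta^\top D^\top \end{bmatrix} \succ 0 ,
\]
and two further Schur complements --- extracting $\epsilon^{-1} Y^\top Y$ via a diagonal block $\epsilon I_T$ and $\epsilon D\Delta\Delta^\top D^\top$ via a diagonal block $\epsilon^{-1} I_{n_w}$ --- turn it into exactly the LMI \eqref{const:control 3} with $P$ in place of $P - \gamma I_{n_x}$. Because $\gamma > 0$, the matrix in \eqref{const:control 3} is dominated by that one, so its feasibility implies the displayed inequality for every admissible $\hat{W}$, which finishes the argument.

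The crux is the dependence of $\bar{A}$ on the unknown noise matrix $W_0$; it is handled by imposing the Lyapunov inequality uniformly over all noise compatible with Assumption~\ref{assume:disturbance} and by absorbing the bilinear term $\mathcal{H}\hat{W}\mathcal{E}$ through $\epsilon$, at the cost of some conservatism. The remainder is routine Schur-complement bookkeeping --- in particular tracking which factor carries $\epsilon$ and which carries $\epsilon^{-1}$ --- while the slack $\gamma > 0$ merely guarantees that SDP feasibility yields a strictly negative-definite Lyapunov decrease.
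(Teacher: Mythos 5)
Your proposal is correct and follows essentially the same route as the paper's proof: setting $G = Y P^{-1}$ to invoke Lemma~\ref{lemma:data-based closed-loop sys}, using the Lyapunov function $x^\top P^{-1} x$, absorbing the unknown noise term via the $\epsilon$-weighted Young/Petersen bound with $\Delta = \delta\sqrt{T} I_{n_w}$ (the paper cites Lemma~4 of \cite{Persis+22} for exactly this inequality), and recovering \eqref{const:control 3} by Schur complements. The only cosmetic difference is that you treat $\gamma > 0$ as slack applied at the end, whereas the paper strengthens the Lyapunov decrease condition by $\gamma I_{n_x}$ before the Schur-complement steps; the two are equivalent.
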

\begin{proof}
Suppose the SDP problem \eqref{op:control design} is feasible. Let $G = Y P^{-1}$, then the constraint \eqref{const:control 2} is equivalent to
\begin{equation}\label{eq:thm control pf1}
X_0 G = I_{n_x}.	
\end{equation}
Combining \eqref{controller:thm} with \eqref{eq:thm control pf1} gives \eqref{eq:lemma data closed 1}, which enables the use of Lemma \ref{lemma:data-based closed-loop sys} and leads to \eqref{eq:lemma data closed 2}. 

The next step is to prove that \eqref{const:control 3} ensures asymptotic stability of the system
\begin{equation}\label{eq:thm control pf2}
x(k+1) = (X_1 G - D W_0 G) x(k) 	
\end{equation}
where $W_0\in \mathcal{W} := \{ W \in \mathbb{R}^{n_w \times T} \mid |W(:,j)| \leq \delta \times \mathbf{1}_{n_w}, ~ j \in [1,T]\}$.

Consider the Lyapunov function $V(k) = x(k)^\top P^{-1} x(k)$. A sufficient condition ensuring asymptotic stability of \eqref{eq:thm control pf2} is 
$V(k+1) - V(k)  < 0$. 
Substituting \eqref{eq:thm control pf2} into this condition gives
\begin{align}\label{eq:thm control pf4}	
\setlength{\parindent}{0in}	
x(k)^\top \Pi x(k) < 0,
\end{align}
where $\Pi = (X_1 G - D W_0 G)^\top P^{-1} (X_1 G - D W_0 G) - P^{-1}$.

The condition \eqref{eq:thm control pf4} is equivalent to $\Pi \prec 0$. Multiplying both of its sides with $P$ and using $G P = Y$ leads to
\begin{equation}\label{eq:thm control pf5}	
Y^\top(X_1 - D W_0)^\top P^{-1} (X_1 - D W_0)Y - P \prec 0.
\end{equation}

To speed up the convergence to the origin, the condition \eqref{eq:thm control pf5} is replaced by a stronger one as follows:
\begin{equation}\label{eq:thm control pf6}	
	Y^\top(X_1 - D W_0)^\top P^{-1} (X_1 - D W_0)Y - P + \gamma I_{n_x} \prec 0,
\end{equation}
where $\gamma > 0$ is a decision variable.

Applying Schur complement to \eqref{eq:thm control pf6} and after some re-arrangement, it gives
%\begin{equation}\label{eq:thm control pf7}
%\begin{bmatrix}
%P - \gamma I_{n_x} & Y^\top(X_1 - D W_0)^\top \\
%\star & P
%\end{bmatrix} 
%\succ 0.
%\end{equation}	
%Rewriting \eqref{eq:thm control pf7} as
\begin{equation}\label{eq:thm control pf13}
\begin{bmatrix}
P - \gamma I_{n_x} & Y^\top X_1^\top \\
\star & P
\end{bmatrix} 
-  M W_0^\top N - N^\top W_0 M^\top \succ 0,
\end{equation}	
where
$M^\top = \left[ Y, ~\mathbf{0}_{T \times n_x} \right]$ and
$N = \left[ \mathbf{0}_{n_w \times n_x}, ~D^\top\right].
$

It follows from \cite[Lemma 4]{Persis+22} that
\begin{equation}
M W_0^\top N + N^\top W_0 M \preceq \epsilon^{-1} M M^\top + \epsilon N^\top \Delta \Delta^\top N,	
\end{equation}
for any given scalar $\epsilon > 0$ and $\Delta = \delta \sqrt{T} I_{n_w}$. Hence, a sufficient condition for \eqref{eq:thm control pf13} is given as
\begin{equation}\label{eq:thm control pf14}
\begin{bmatrix}
	P - \gamma I_{n_x} & Y^\top X_1^\top \\
	\star & P
\end{bmatrix}  - \epsilon^{-1} M M^\top - \epsilon N^\top \Delta \Delta^\top N \succ 0.	
\end{equation}
Apply Schur complement to \eqref{eq:thm control pf14} gives \eqref{const:control 3}.
In summary, \eqref{const:control 3} is sufficient for ensuring \eqref{eq:thm control pf5} and thus asymptotic stability of the system \eqref{eq:thm control pf2}.
\end{proof}	

A necessary condition for ensuring feasibility of the SDP problem \eqref{op:control design} is that $X_0$ has full row rank \cite{Persis+22}, because it ensures 
\eqref{const:control 2} admit a solution. 
The controller determined from Theorem \ref{thm:control design} only ensures asymptotic stability of the platooning error system when $D w(k) = 0$. To ensure platoon stability when $D w(k) \neq 0$ and satisfaction of the control input and safety constraints specified in \eqref{control obj 2}, the controller applied to the AV $n$ is refined as 
\begin{equation}\label{eq:refined controller}
	u(k) = K x(k) + \hat{u}(k),
\end{equation}
where $\hat{u}(k)$ is a the outer loop MPC controller designed in Section \ref{sec: outer loop}. 

\section{Outer Loop MPC Design}\label{sec: outer loop}
The platooning error system model under the refined controller \eqref{eq:refined controller}
is derived as
\begin{subequations}\label{observer_eq:data closed 1}
\begin{align}
x(k+1) &= \bar{A} x(k) + B \hat{u}(k) + D d(k),	\\
y(k) &= x(k),
\end{align}	
\end{subequations}
where $\bar{A} = X_1 G_1$ and $d(k) = - W_0 G x(k) + w(k)$ represents 
the lumped effect of $w(k)$ and the unknown coupling of $x(k)$ and $W_0$. $y(k)$ is the output measurement.

At the steady state, AV 0 tracks $v^*$ accurately (\textit{i.e.}, $v_0 = v^*$) and $x(k) = \mathbf{0}$, thus $w = \mathbf{0}$ and the lumped disturbance $d(k)$ vanishes. This inspires us to design the outer loop MPC following the concept of offset free MPC \cite{Maeder+09} which consists in using an internal model to predict future system behaviour. 

\subsection{Data-driven internal model}
This paper builds an internal model as the data-based nominal system $x(k+1) = \bar{A} x(k) + B \hat{u}(k)$ augmented with a disturbance model that captures the mismatch between the nominal and real platooning error system~\eqref{observer_eq:data closed 1}. 
Since the lumped disturbance $d(k)$ depends on $x(k)$ and $w(k)$ and subsequently the vehicle position, velocity and acceleration, a second-order linear disturbance model is used to derive the following internal model:
\begin{align}\label{observer_eq:aug model}
	\underbrace{\begin{bmatrix}
		x(k+1) \\ \omega_1(k+1) \\ \omega_2(k+1)
\end{bmatrix}}_{\xi(k+1)} &= 
\underbrace{\begin{bmatrix}
		\bar{A} & B_d & \mathbf{0} \\
		\mathbf{0} & I_{n_w} & t_s I_{n_w} \\
		\mathbf{0} & \mathbf{0} & I_{n_w}
\end{bmatrix}}_{A_\xi} \underbrace{\begin{bmatrix}
		x(k) \\ \omega_1(k) \\ \omega_2(k)
\end{bmatrix}}_{\xi(k)}  + 
\underbrace{\begin{bmatrix}
		B \\ \mathbf{0} \\ \mathbf{0}
\end{bmatrix}}_{B_\xi} \hat{u}(k), \nonumber\\
y(k) &= \underbrace{\begin{bmatrix}
		I_{n_x} ~ C_d ~ \mathbf{0}
\end{bmatrix}}_{C_\xi} \xi(k),
\end{align} 
where $w_1(k) = w(k)$ and $w_2(k) \in \mathbb{R}^{n_w}$. The constant matrices $B_d , C_d \in \mathbb{R}^{n_x \times n_w}$ are chosen such that 
\begin{equation}\label{rank constraint}
\setlength\arraycolsep{1pt}	
\mathrm{rank}	
\begin{bmatrix}
\lambda I_{n_x + 2 n_w} - A_\xi \\ 
C_\xi
\end{bmatrix} = n_x + 2 n_w,
\end{equation}
for all $\lambda$ with $\mathrm{Re}(\lambda) \geq 0$.

At each step $k$, the MPC will use \eqref{observer_eq:aug model} to predict the future platooning error dynamics, which requires the initial internal model state value $\xi(k)$. Since $\omega_1(k)$ and $\omega_2(k)$ are unknown, $\xi(k)$ is unknown. To address this, an observer can be used to estimate its value. 
The fulfilment of \eqref{rank constraint} ensures that the augmented system~\eqref{observer_eq:aug model} is detectable \cite{LanPatton20} and there always exists a stable observer to estimate $\xi(k)$ accurately.

The proposed observer takes the form of
\begin{subequations}\label{observer_eq:data observer2}
	\begin{align}
		z(k+1) &= N_\xi z(k) + G_\xi \hat{u}(k) + L y(k), \\
		\hat{\xi}(k) &= z(k) + H_\xi y(k),
	\end{align}		
\end{subequations}
where $z(k) \in \mathbb{R}^{n_x+2n_w}$ is the observer state and 
$\hat{\xi}(k)$ is the estimate of $\xi(k)$. The matrices $N_\xi \in 
\mathbb{R}^{(n_x+2n_w) \times (n_x+2n_w)}$, $G_\xi \in \mathbb{R}^{(n_x+2n_w) 
	\times n_u}$, $L \in \mathbb{R}^{(n_x+2n_w) \times n_x}$ 
and $H_\xi \in \mathbb{R}^{(n_x+2n_w) \times n_x}$ are to be designed.

Define the estimation error as $e(k) = \xi(k) - \hat{\xi}(k)$. It can be 
derived that $e(k) = \Phi \xi(k) - z(k)$, where $\Phi = I - H_\xi 
C_\xi$. The estimation error dynamics are then derived as
\begin{align}\label{observer_eq:error2 sys}
	e(k+1) ={}&  N_\xi e(k) + (\Phi A_\xi - N_\xi - L_1 C_\xi) \xi \nonumber\\
	&+ (\Phi B_\xi - G_\xi) \hat{u} + (N_\xi H_\xi - L_2) y(k),
\end{align}
where $L = L_1 + L_2$.

Define the following equations:
\begin{align}\label{observer_eq:matrix conds}
	& \Phi A_\xi - N_\xi - L_1 C_\xi = \mathbf{0},~ 
	\Phi B_\xi - G_\xi = \mathbf{0}, \nonumber\\
	 & N_\xi H_\xi - L_2 = \mathbf{0}. 
\end{align}	
Applying \eqref{observer_eq:matrix conds} to the estimation error dynamics \eqref{observer_eq:error2 sys} gives
\begin{align}\label{observer_eq:error2 sys2}
	e(k+1) = (\Phi A_\xi - L_1 C_\xi) e(k). 
\end{align}

The way of designing the gains $H_\xi$ and $L_1$ to stabilise the estimation error system \eqref{observer_eq:error2 sys2} is presented in Lemma
\ref{lemma:observer}.
\begin{lemma}\label{lemma:observer}
The estimation error system~\eqref{observer_eq:error2 sys2} is asymptotically stable if there 
exists a matrix $P_o \in \mathbb{R}^{(n_x+2n_w) \times (n_x+2n_w)}$ and a scalar $\epsilon_o$ such that the following SDP problem is feasible:
\begin{subequations}\label{lemma_observer:sdp}
\begin{align}	
	\label{lemma_const: 1}
	& \hspace{2cm} P_o \succ 0,~ \epsilon_o > 0, \\	
	\label{lemma_const: 2}
	& \begin{bmatrix}
		P_o - \epsilon_o I & (P_o A_\xi - \bar{H} C_\xi A_\xi - \bar{L}_1 C_\xi)^\top \\
		\star & P_o 
	\end{bmatrix} \succ 0.
\end{align}	
\end{subequations}
The gains are 
obtained as $H_\xi = P_o^{-1} \bar{H}$ and $L_1 = P_o^{-1} \bar{L}_1$.	
\end{lemma}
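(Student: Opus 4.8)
The plan is to prove this through a standard quadratic Lyapunov argument that is converted into the stated LMI by a Schur complement and a linearizing change of variables. Write the closed-loop error matrix as $A_e := \Phi A_\xi - L_1 C_\xi$, so that \eqref{observer_eq:error2 sys2} reads $e(k+1) = A_e e(k)$. Since $\Phi = I - H_\xi C_\xi$, substituting gives $A_e = A_\xi - H_\xi C_\xi A_\xi - L_1 C_\xi$, which is precisely the expression that appears inside the off-diagonal block of \eqref{lemma_const: 2}.

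First I would take the candidate Lyapunov function $V(k) = e(k)^\top P_o e(k)$ with $P_o \succ 0$. Along trajectories of \eqref{observer_eq:error2 sys2} one has $V(k+1) - V(k) = e(k)^\top (A_e^\top P_o A_e - P_o) e(k)$, so asymptotic stability holds if $A_e^\top P_o A_e - P_o \prec 0$. To obtain a uniform decrease margin and a cleaner condition, I would impose the slightly stronger inequality $A_e^\top P_o A_e - P_o + \epsilon_o I \prec 0$ for a scalar $\epsilon_o > 0$, equivalently $P_o - \epsilon_o I - A_e^\top P_o A_e \succ 0$. Applying the Schur complement with the $P_o$ block as pivot yields the equivalent block inequality $\begin{bmatrix} P_o - \epsilon_o I & A_e^\top P_o \\ \star & P_o \end{bmatrix} \succ 0$.

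The remaining, and only delicate, step is that the off-diagonal term $P_o A_e = P_o A_\xi - P_o H_\xi C_\xi A_\xi - P_o L_1 C_\xi$ is bilinear in the unknowns $P_o$, $H_\xi$, $L_1$. I would remove this bilinearity by the change of variables $\bar H := P_o H_\xi$ and $\bar L_1 := P_o L_1$, under which $P_o A_e = P_o A_\xi - \bar H C_\xi A_\xi - \bar L_1 C_\xi$ becomes affine in $(P_o, \bar H, \bar L_1)$; substituting into the block inequality reproduces exactly \eqref{lemma_const: 2}. Conversely, given any feasible tuple $(P_o, \bar H, \bar L_1, \epsilon_o)$, the positive definiteness of $P_o$ makes it invertible, so $H_\xi = P_o^{-1} \bar H$ and $L_1 = P_o^{-1} \bar L_1$ are well defined, and reversing the Schur complement gives $V(k+1) - V(k) \le -\epsilon_o \lVert e(k) \rVert^2 < 0$ for $e(k) \neq 0$, hence asymptotic stability of \eqref{observer_eq:error2 sys2}.

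Overall the argument is essentially routine; the points that need care are getting the direction of the Schur complement right and confirming that the linearizing substitution is reversible, which it is because $P_o \succ 0$ is nonsingular. If desired, one could additionally note that the detectability of the augmented pair guaranteed by the rank condition \eqref{rank constraint} ensures the SDP \eqref{lemma_observer:sdp} is feasible, but since the lemma is stated as a sufficiency claim this is not strictly needed for the proof.
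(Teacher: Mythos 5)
Your proposal is correct and follows essentially the same route as the paper: a quadratic Lyapunov function, a strengthened decrease condition with margin $\epsilon_o$, a Schur complement, and the linearizing substitutions $\bar H = P_o H_\xi$, $\bar L_1 = P_o L_1$ (reversible since $P_o \succ 0$). The only cosmetic difference is that you work directly with $V(k)=e(k)^\top P_o e(k)$, which lands on \eqref{lemma_const: 2} and on the stated gain recovery $H_\xi = P_o^{-1}\bar H$, $L_1 = P_o^{-1}\bar L_1$ immediately, whereas the paper starts from $V_o(k)=e(k)^\top P_o^{-1} e(k)$ and then multiplies through by $P_o$ to reach the same LMI.
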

\begin{proof}
Consider a Lyapunov function $V_o(k) = e(k)^\top P_o^{-1} e(k)$ with $P_o \succ 0$.
The matrix $(\Phi A_\xi - L_1 C_\xi)$ is Schur stable if 
\begin{equation}
V_o(k+1) - V_o(k) < 0,	
\end{equation}
and equivalently, 
\begin{equation}\label{lemma_pf:observer eq1}
	(\Phi A_\xi - L_1 C_\xi)^\top P_o^{-1} (\Phi A_\xi - L_1 C_\xi) - P_o^{-1} \prec 0. 
\end{equation}	
Multiplying both sides of \eqref{lemma_pf:observer eq1} with $P_o$ and defining $\bar{H} = H_\xi P_o$ and $\bar{L}_1 = L_1 P_o$, it then gives
\begin{equation}\label{lemma_pf:observer eq2}
\Omega^\top P_o^{-1} \Omega - P_o  \prec 0,
\end{equation}
where $\Omega = P_o A_\xi - \bar{H} C_\xi A_\xi - \bar{L}_1 C_\xi$.

To improve convergence of the observer, we consider the following condition that is stronger than \eqref{lemma_pf:observer eq2}:
\begin{equation}\label{lemma_pf:observer eq3}
	\Omega^\top P_o^{-1} \Omega - P_o + \epsilon_o I_{n_x + 2 n_w} \prec 0,
\end{equation}
where $\epsilon_o > 0$ is a decision variable.
Applying Schur complement to \eqref{lemma_pf:observer eq3} gives \eqref{lemma_const: 2}. 
\end{proof}	

Solving \eqref{lemma_observer:sdp} gives $H_\xi$ and $L_1$, which are submitted into \eqref{observer_eq:matrix conds} to get $N_\xi$, $G_\xi$, $L_2$, and thus $L = L_1 + L_2$. Convergence of the observer can be further improved by placing the poles of $(\Phi A_\xi - L_1 C_\xi)$ to a desirable region \cite{KrokavecFilasova13} through adding an extra linear constraint into \eqref{lemma_observer:sdp}.

\subsection{MPC design} \label{prob original}
The proposed MPC design is formulated as the optimisation problem:
\begin{subequations}\label{op:MPC}	
\begin{align}
& \min\limits_{c(0),\cdots,c(N-1)} J_N \nonumber\\
\label{mpc st 1}	
\text{s.t.} ~ &x(j+1) = \bar{A} x(j) + B c(j) + B_d w_1(j),  j \in [0,N],\\
\label{mpc st 2}
 & w_1(j+1) = w_1(j) + t_s w_2(j), ~j \in [0,N], \\
 & w_2(j+1) = w_2(j), ~j \in [0,N],\\
\label{mpc st 3}	
 & C x(j) \in \mathbb{X}, ~ j \in [0,N], \\
 &c(j) + K x(j) \in \mathbb{U}, ~ j \in [0,N-1], \\
 &x_0 = \hat{x}(k), ~ w_1(0) = \hat{w}_1(k), ~ w_2(0) = \hat{w}_2(k),
\end{align}
\end{subequations}
where the cost function is defined as $J_N = \|x(N) - \bar{x}(k) \|^2_P 
+ \sum_{j=0}^{N-1} (\|x(j) - \bar{x}(j) \|^2_Q + \|c(j) - \bar{u}(k) \|^2_R)$.
The weights $Q$ and $R$ satisfy $Q \succeq 0$ and $R \succ 0$, and $P$ is solved from the Riccati equation
$
P = \bar{A}^\top P \bar{A} - (\bar{A}^\top P B) (B^\top P B + R)^{-1} (B^\top P \bar{A}) + Q.
$
The targeted steady-state state $\bar{x}(k)$ and input $\bar{u}(k)$ are solved from the following optimisation problem:
\begin{subequations}\label{op:steady-state}
\begin{align}
& \min\limits_{\bar{x}(k), \bar{u}(k)} \| \bar{x}(k)\|^2_{\bar{Q}} + \| \bar{u}(k) \|^2_{\bar{R}} \nonumber\\	
\text{s.t.} ~& 
\label{op:steady-state cond1}
	\begin{bmatrix}
	I_{n_x} - \bar{A} & -B	\\C & \mathbf{0} 
	\end{bmatrix}
	\begin{bmatrix}\bar{x}(k) \\ \bar{u}(k) \end{bmatrix}
	= 
	\begin{bmatrix}
	B_d \hat{d}(k) \\ -C_d \hat{d}(k)
	\end{bmatrix}, \\
\label{op:steady-state cond2}
	& C \bar{x}(k) \in \mathbb{X}, ~ \bar{u}(k) \in \mathbb{U},
\end{align}
\end{subequations}
where $\bar{Q} \succ 0$ and $\bar{R} \succeq 0$ are the given weights.

Solving the MPC problem~\eqref{op:MPC} gives the optimal control sequence $\{c^*(j)\}_{j=0}^{N-1}$ and the obtained MPC policy is set as $\hat{u}(k) = c^*(0)$.
Property of the proposed overall control law is discussed in Theorem \ref{thorem:mpc}. 
\begin{theorem}\label{thorem:mpc}
Suppose the matrices $B_d$ and $C_d$ are chosen to satisfy \eqref{rank constraint} and the optimisation problems \eqref{op:control design}, \eqref{lemma_observer:sdp}, \eqref{op:MPC} and \eqref{op:steady-state} are feasible. Applying the overall control law $u(k) = K x(k) + \hat{u}(k)$ to AV $n$ ensures stability of the mixed platoon under the input and safety constraints in \eqref{control obj 2}.	
\end{theorem}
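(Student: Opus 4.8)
The plan is to establish the two assertions of the theorem --- closed-loop satisfaction of \eqref{control obj 2} and asymptotic stability --- by composing three facts that are already in place: (a)~the inner-loop gain $K$ from Theorem~\ref{thm:control design} renders the realised closed-loop matrix $A+BK=X_1 G-D W_0 G$ Schur, so the data-based nominal pair $(\bar{A},B)$ with $\bar{A}=X_1 G$ inherits stabilisability (the data term $D W_0 G$ being small enough not to destabilise the human-vehicle modes) and the Riccati solution $P$ used in \eqref{op:MPC} exists; (b)~the gains from Lemma~\ref{lemma:observer} make the estimation error \eqref{observer_eq:error2 sys2} vanish, so that $\hat{x}(k)$, $\hat{w}_1(k)$, $\hat{w}_2(k)$ and hence $\hat{d}(k)$ converge to their true values; and (c)~the MPC \eqref{op:MPC} together with the target calculator \eqref{op:steady-state} forms an offset-free regulator in the sense of \cite{Maeder+09,PannocchiaBemporad07}.

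First I would dispatch the constraint claim, which is the routine part. Since the state output $y(k)=x(k)$ is measured, the quantity initialising \eqref{op:MPC} satisfies $\hat{x}(k)=x(k)$, so the first-stage constraints $Cx(0)\in\mathbb{X}$ and $c(0)+Kx(0)\in\mathbb{U}$ enforced inside \eqref{op:MPC} read exactly $Cx(k)\in\mathbb{X}$ and $u(k)=Kx(k)+\hat{u}(k)\in\mathbb{U}$, i.e.,~\eqref{control obj 2}. To make this hold at every step I would argue recursive feasibility in the usual way: the tail $\{c^*(1),\dots,c^*(N-1)\}$ of the optimal sequence, completed by the Riccati-based terminal control $c(N)=\bar{u}(k)-(B^\top P B+R)^{-1}B^\top P\bar{A}(x(N)-\bar{x}(k))$, is a feasible candidate for \eqref{op:MPC} at $k+1$ on the terminal region where that control is admissible; the standing feasibility hypothesis on \eqref{op:MPC} and \eqref{op:steady-state}, and the bound on $w$ in Assumption~\ref{assume:disturbance}, are used here.

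For stability I would take $J_N^*\!\left(x(k)-\bar{x}(k)\right)$ as a Lyapunov function for the regulation error. Optimality together with the Riccati identity makes it decrease along the nominal prediction, and the true closed loop differs from the nominal one only through the lumped disturbance $d(k)=-W_0 G x(k)+w(k)$ and the estimation error $e(k)$, which tends to zero by Lemma~\ref{lemma:observer}; hence $\hat{d}(k)$ converges to the true disturbance-model state. At the platoon steady state AV~0 tracks $v^*$, so $w=0$ and $x=0$ and $d(k)$ vanishes; the right-hand side of \eqref{op:steady-state cond1} is then zero, for which $(\bar{x}(k),\bar{u}(k))=(0,0)$ is feasible and, since $\bar{Q}\succ0$, the unique minimiser. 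Feeding $\bar{x}(k)\to0$ and $\bar{u}(k)\to0$ into the Lyapunov decrease yields $x(k)-\bar{x}(k)\to0$, hence $x(k)\to0$, which is objective~(i); together with the constraint argument this proves the theorem.

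The step I expect to be the real obstacle is reconciling the offset-free MPC machinery with the fact that the lumped disturbance $d(k)=-W_0 G x(k)+w(k)$ is \emph{state dependent} rather than exogenous: the error recursion \eqref{observer_eq:error2 sys2} is derived under the idealisation that $d(k)$ follows the second-order integrator model embedded in $A_\xi$, whereas $d(k)$ is in fact driven by $x(k)$. I would handle this by viewing the triple $\bigl(x(k)-\bar{x}(k),\,e(k),\,\text{disturbance-model mismatch}\bigr)$ as one cascaded interconnection and running a vanishing-perturbation / small-gain argument: since $\|-W_0 G x(k)\|\le\|W_0 G\|\,\|x(k)\|$ is $O(\|x(k)\|)$ and disappears at the origin, a sufficiently fast observer --- obtainable via the pole-placement refinement of \eqref{lemma_observer:sdp} noted after Lemma~\ref{lemma:observer} --- keeps the origin asymptotically stable for the coupled dynamics. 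A secondary point, which I would cover with the standing premise that AV~0 asymptotically tracks $v^*$, is that $x(k)\to0$ needs $w(k)\to0$; for a persistently time-varying reference one can only claim input-to-state stability of the platooning error with respect to $w$.
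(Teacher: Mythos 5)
Your route is genuinely different from the paper's. The paper does not attempt a value-function Lyapunov argument, recursive feasibility, or any small-gain analysis of the state-dependent lumped disturbance. Its proof is a steady-state consistency argument in the offset-free MPC spirit: feasibility of \eqref{lemma_observer:sdp} gives $e(k)\to\mathbf{0}$; rewriting the observer \eqref{observer_eq:data observer2} in Luenberger form \eqref{thorem:mpc pf1} shows that its steady state obeys \eqref{thorem:mpc pf3}; imposing $y_\infty=\mathbf{0}$ turns this into \eqref{thorem:mpc pf4}, which is exactly the equality constraint \eqref{op:steady-state cond1} of the target calculator, so the targets $(\bar{x},\bar{u})$ are consistent with zero platooning error and satisfy the constraints by \eqref{op:steady-state cond2}; the hypothesised feasibility of \eqref{op:MPC} is then invoked to conclude that $x$ and $u$ track these targets. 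In short, the paper leans on the feasibility hypotheses and on \cite{Maeder+09,PannocchiaBemporad07} rather than proving decrease and recursive feasibility from scratch. Your sketch is more ambitious and correctly isolates the point the paper is silent about: $d(k)=-W_0Gx(k)+w(k)$ is state dependent, so the integrator model in \eqref{observer_eq:aug model} is only an approximation, and your cascade/vanishing-perturbation idea is a sensible way to close that hole (though you do not carry it out, and neither does the paper). Your closing caveat, that for a persistently varying reference one can only claim input-to-state stability with respect to $w$, is accurate and is implicitly the sense in which the paper's steady-state argument should be read.

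Two steps of your argument would fail as written. First, $\hat{x}(k)=x(k)$ is not true in this architecture: \eqref{op:MPC} is initialised with the observer's estimate of the internal-model state, and in \eqref{observer_eq:aug model} the output map is $y=x+C_d\omega_1$, so the estimated nominal state need not equal the measured platooning error before the estimation error has converged. Consequently the first-stage constraints of \eqref{op:MPC} are imposed on $\hat{x}(k)$ and $c(0)+K\hat{x}(k)$, not on $x(k)$ and the applied input $Kx(k)+c^*(0)$, and exact satisfaction of \eqref{control obj 2} by the true trajectories is only obtained asymptotically or by appeal to the feasibility hypothesis; your ``routine'' constraint paragraph needs this correction. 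Second, your recursive-feasibility candidate invokes a terminal controller admissible on a terminal region, but \eqref{op:MPC} has only a Riccati terminal cost and no terminal constraint set, so the standard tail argument does not apply as stated; the theorem (and the paper's proof) sidesteps this by simply assuming feasibility of \eqref{op:MPC} and \eqref{op:steady-state} at each step, which is also all your argument is entitled to use. Relatedly, your claim that $(\bar{A},B)$ with $\bar{A}=X_1G$ ``inherits'' stabilisability because $DW_0G$ is small is heuristic; the existence of the Riccati solution $P$ is taken for granted in the paper as well, so this is not a divergence but it is not a proved step either.
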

\begin{proof}
If \eqref{lemma_observer:sdp} is feasible, then the estimation error system \eqref{observer_eq:error2 sys2} is asymptotically stable, \textit{i.e.}, $\lim_{k \rightarrow \infty} e(k) = \mathbf{0}$ and $\lim_{k \rightarrow \infty} e_y(k) = \lim_{k \rightarrow \infty} (y(k) - C_\xi \hat{\xi}(k)) = \mathbf{0}$. It is shown in \cite{LanPatton20} that the observer \eqref{observer_eq:data observer2} can be reformulated as the following Proportional-Derivative Luenberger observer:
\begin{equation}\label{thorem:mpc pf1}
	\hat{\xi}(k+1) = \bar{A} \hat{\xi}(k) + B \hat{u}(k) + \hat{L} e_y(k)  + \hat{H} e_y(k+1),
\end{equation} 
where $\hat{L} = \Phi^{-1} L_1$, $\hat{H} = \Phi^{-1} H$ and $e_y(k) = y(k) - C_\xi \hat{\xi}(k)$.

At steady state (\textit{i.e.}, $k \rightarrow \infty$), $e_y(k) = e_y(k+1) = \mathbf{0}$ and $y_\infty - C_\xi \hat{\xi}_\infty = \mathbf{0}$.  
It then follows from \eqref{thorem:mpc pf1} that
\begin{equation}\label{thorem:mpc pf2}
	\hat{\xi}_\infty = \bar{A} \hat{\xi}_\infty + B \hat{u}_\infty.
\end{equation} 
Hence, the steady state of the observer \eqref{observer_eq:data observer2} satisfy
\begin{equation}\label{thorem:mpc pf3}
\begin{bmatrix}
	I_{n_x} - \bar{A} & -B	\\ C & \mathbf{0} 
\end{bmatrix}
\begin{bmatrix}\bar{x}_\infty \\ \bar{u}_\infty \end{bmatrix}
= 
\begin{bmatrix}
	B_d \hat{d}_\infty \\ y_\infty -C_d \hat{d}_\infty
\end{bmatrix}.
\end{equation}

To ensure platoon stability, the controller $u(k) = K x(k) + \hat{u}(k)$ needs to steer the platooning error $x(k)$, \textit{i.e.} $y(k)$, to zero. 
This means that at steady state $y_\infty = \mathbf{0}$.
Hence, the equation \eqref{thorem:mpc pf3} becomes
\begin{equation}\label{thorem:mpc pf4}
	\begin{bmatrix}
		I_{n_x} - \bar{A} & -B	\\ C & \mathbf{0} 
	\end{bmatrix}
	\begin{bmatrix}\bar{x}_\infty \\ \bar{u}_\infty \end{bmatrix}
	= 
	\begin{bmatrix}
		B_d \hat{d}_\infty \\ -C_d \hat{d}_\infty
	\end{bmatrix}.	
\end{equation}
In view of this, solving the optimisation problem \eqref{op:steady-state} gives the target steady state $\bar{x}$ and input $\bar{u}$ that can ensure platoon stability (\textit{i.e.}, \eqref{op:steady-state cond1}) and satisfaction of the safety/input constraints (\textit{i.e.}, \eqref{op:steady-state cond2}). Furthermore, 
feasibility of the MPC problem \eqref{op:MPC} ensures the platooning error state $x(k)$ and input $u(k)$ track their targeted values $\bar{x}$ and $\bar{u}$, guaranteeing platoon stability and satisfaction of the safety/input constraints. 
\end{proof}	

The string stability, in the sense of head-to-tail string stability \cite{Lan+21d}, is a direct consequence of Theorem \ref{thorem:mpc}. This is because the proposed dual-loop control for the AV $n$ at the rear ensures stability of the platooning error dynamics and satisfaction of the constraints in \eqref{control obj 2}. These constraints include the input constraint $|u| \leq u_{\max}$ and the safety constraints on the relative distance $|\tilde{h}_n| < \tilde{h}_{\max}$ and the relative velocity $|\tilde{v}_n| \leq \tilde{v}_{\max}$ between AV $n$ and its preceding vehicle HV $n-1$. The fulfilment of these safety constraints means that the rear AV can safely follow the preceding HVs whenever the lead vehicle velocity changes propagate downstream the entire platoon. The head-to-tail string stability of the established mixed platoon will also be demonstrated in the simulation results.

\section{Simulation Results} \label{sec:simulation}
Simulation validation of the proposed design is performed on a mixed platoon with six vehicles, including two AVs at the front and rear, and four HVs in between. 
The simulation use the following vehicle parameters: $\tau_1 = 0.13$, $\alpha_1 = 0.2$, $\beta_1 = 0.4$,
$\tau_2 = 0.12$, $\alpha_2 = 0.2$, $\beta_2 = 0.45$,
$\tau_3 = 0.16$, $\alpha_3 = 0.3$, $\beta_3 = 0.4$, 
$\tau_4 = 0.15$, $\alpha_4 = 0.2$, $\beta_4 = 0.45$, 
$\tau_5 = 0.12$. The other parameters are: $h_s = 5~\mathrm{m}$, $h_g = 50~\mathrm{m}$, $v_{\max} = 40~\mathrm{m/s}$, $u_{\max} = 4~\mathrm{m/s^2}$, $t_s = 0.05~\mathrm{s}$, and $d_\mathrm{safe} = 20~\mathrm{m}$. The initial vehicle state $(p_i,v_i,a_i)$, $i \in [0,5]$, are randomly set as: 
$(120,20,0)$, $(100,20,0)$, $(80,15,0)$, $(65,20,0)$, $(40,20,0)$ and $(15,15,0)$, respectively. 

Simulations are conducted in MATLAB with the SDP problems being solved via YALMIP \cite{lofberg2004yalmip} and MOSEK \cite{mosek2010mosek}. Comparisons are made for the proposed 
data-driven dual-loop control method, the classic ACC \cite{Shladover+12} and the existing single-loop data-driven MPC \cite{Lan+21d}. 
For the proposed method, the number of data samples is set as $T = 500$, the disturbance model uses the matrices $B_d = \mathbf{1}_{15 \times 2}$ and 
$C_d = [0, 1; \mathbf{1}_{14 \times 2}]$, and the parameters used for solving the optimisation problems \eqref{op:MPC} and \eqref{op:steady-state} are $N = 2$, $Q = 10\times I_{15}$, $R = 1$, $\bar{Q} = 10^3 \times I_{15}$, $\bar{R} = 0$. The gains of the classic ACC controller are borrowed from the MATLAB example ``Adaptive Cruise Control with Sensor Fusion''. The data-driven MPC follows the same settings as the reference \cite{Lan+21d}.

\begin{figure}[t]
	\centering
	\includegraphics[width=\columnwidth]{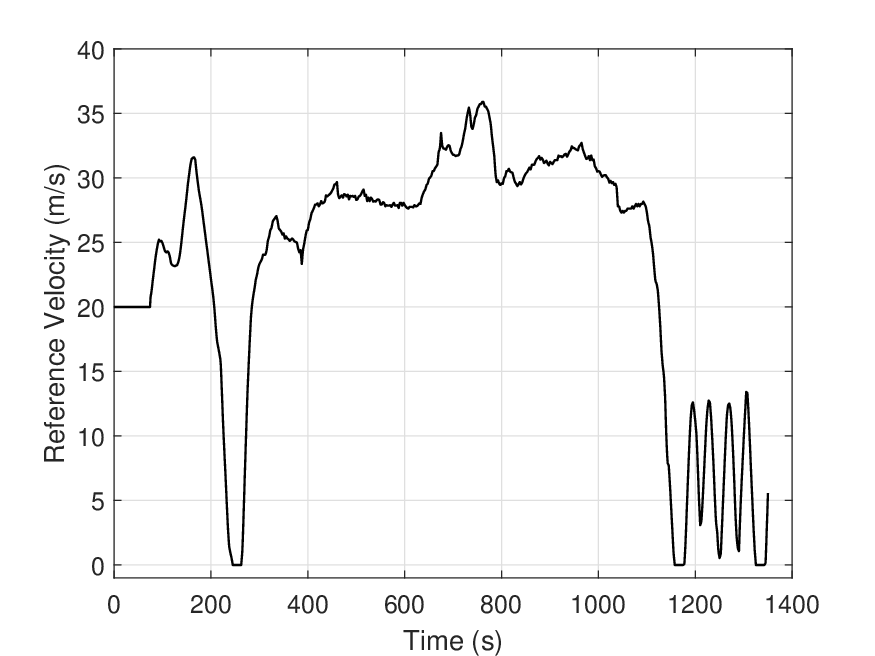} 
	\vspace{-7mm} 
	\caption{Reference velocity.}
	\label{fig2}
\end{figure}

\begin{figure}[t]
	\centering
	\includegraphics[width=\columnwidth]{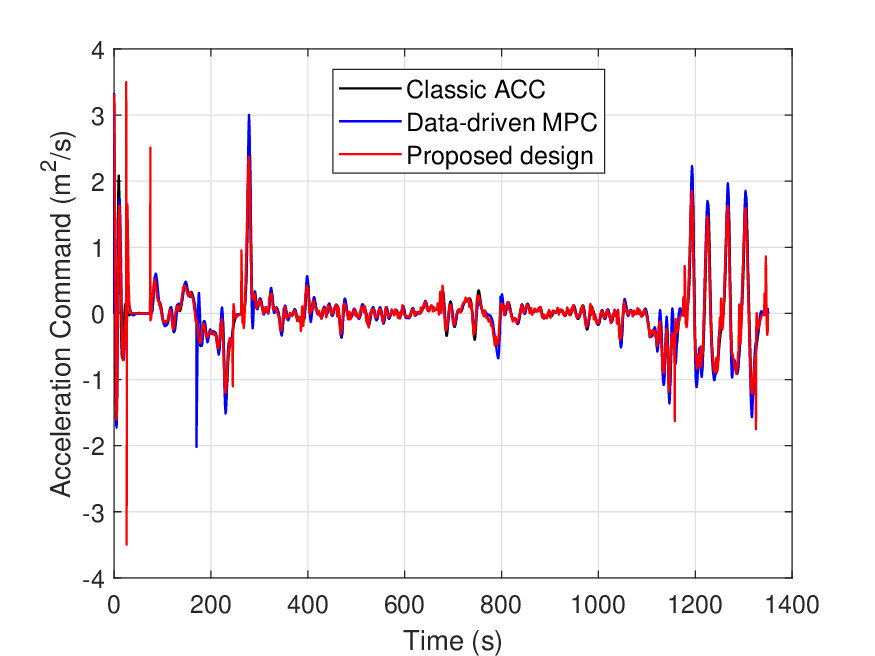}  
	\vspace{-7mm}
	\caption{Acceleration commands of AV 5.}
	\label{fig3}
\end{figure}

The leader AV 0 uses the MPC controller in the work \cite{lan2020min} to track the velocity profile in Figure \ref{fig2}, which combines 
a 75~s constant speed driving at $20~\mathrm{m/s}$ and the SFTP-US06 Drive 
Cycle. This velocity profile can well validate the practical effectiveness of the proposed design, because it consists of high speed driving, high acceleration and rapid speed fluctuations.
The data-driven MPC \cite{Lan+21d} needs $170~\mathrm{s}$ to collect enough data, while the proposed design needs $25~\mathrm{s}$.
It takes an average runtime of $0.004~\mathrm{s}$ to solve the proposed MPC, which is much less than the $0.03~\mathrm{s}$ needed to solve the data-driven MPC \cite{Lan+21d}.

The control commands of AV 5 reported in Figure \ref{fig3} show that the input limit $[-4, 4]~\mathrm{m}/\mathrm{s}^2$ is satisfied by all the methods. Compared to the other two methods, the proposed method enables AV 5 to have smaller velocity deviations, as shown in Figure \ref{fig4}. The gap between HV 4 and AV 5 under the three designs are reported in Figure \ref{fig5}. The classic ACC has large changes in the inter-vehicular distance due to the use of a time-varying desired spacing policy. The proposed method and data-driven MPC aim to maintain a constant desired spacing of $20~\mathrm{m}$ between HV 4 and AV 5. Compared to data-driven MPC, the proposed method is better in maintaining the targeted inter-vehicular gap, with quick response and smaller overshoots.

\begin{figure}[t]
	\centering
	\includegraphics[width=\columnwidth]{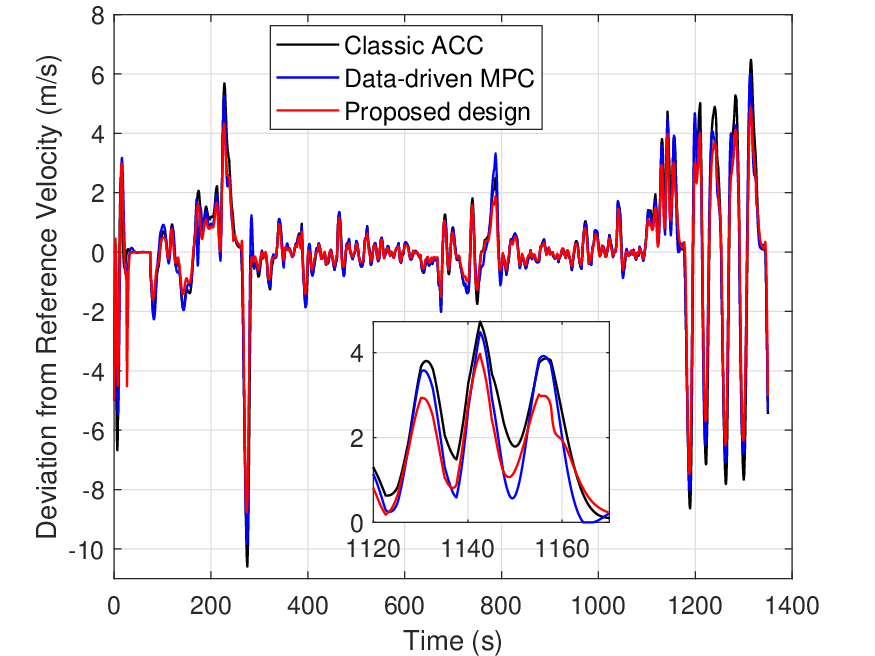}  
	\vspace{-7mm}
	\caption{Deviations of AV 5's velocity from the reference.}
	\label{fig4}
\end{figure}

\begin{figure}[t]
	\centering
	\includegraphics[width=\columnwidth]{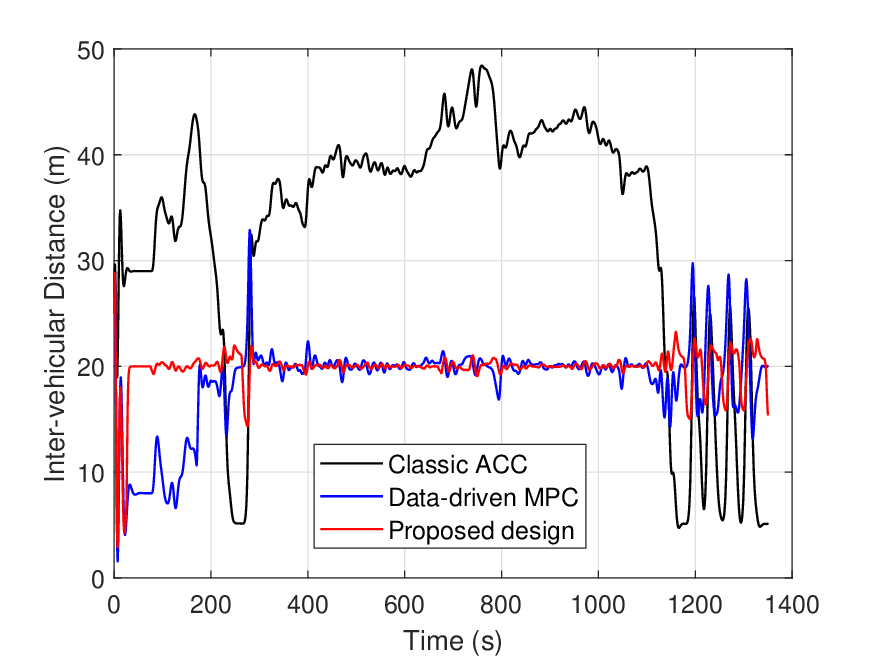} 
	\vspace{-7mm} 
	\caption{Spacing between HV 4 and AV 5.}
	\label{fig5}
\end{figure}

\begin{figure}[t]
	\centering
	\includegraphics[width=\columnwidth]{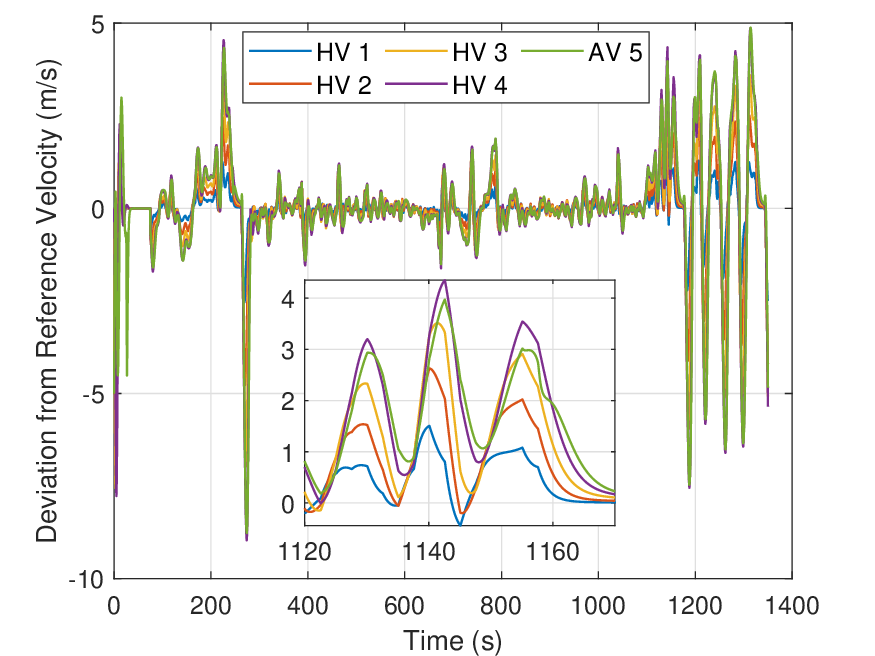} 
	\vspace{-7mm} 
	\caption{Velocity deviations under the proposed design.}
	\label{fig6}
\end{figure}

The deviations of all vehicles' velocities from the reference velocity are used to  check whether the proposed CACC can make the mixed vehicle platoon string stable. As shown in Figure \ref{fig6}, the velocity deviation is amplified when propagating from HV 1 all the way down to HV4, but attenuated by AV 5. This confirms that the proposed design can guarantee head-to-tail string stability \cite{Lan+21d} of the mixed vehicle platoon.

\section{Conclusion} \label{sec:conclusion}
This paper proposes a data-driven dual-loop CACC for mixed vehicle platoons subject to unknown HV model parameters and propulsion time constants. 
The inner loop constant-gain state feedback controller ensures platoon stability, and the outer loop MPC refines the inner loop controller to ensure both platoon stability and satisfaction of safety/input constraints. 
The simulation 
results demonstrate that the proposed CACC is more effective than the classic ACC and existing data-driven MPC methods in ensuring mixed platoon stability under a representative aggressive velocity reference. The proposed method is also  computationally cheaper than the existing data-driven MPC. 
Future research will be extending the proposed CACC for platoons with more realistic nonlinear vehicle dynamics and noisy sensor measurements.

%%%%%%%%%%%%%%%%%%%%%%%%%%%%%%%%%%%%%%%%%%%%%%%%%%%%%%%%%%%%

%\section*{Author contributions}
%Jianglin Lan: Conceptualisation, data curation, formal analysis, funding
%acquisition, investigation, methodology, resources, writing.

%\section*{Conflict of interest}
%The authors declare that they have no conflict of interest.

%\section*{Data availability}
%Data sharing not applicable - no new data generated. Data sharing is not applicable to this article as no new data were created or analysed in this study.

%%%%%%%%%%%%%%%%%%%%%%%%%
\bibliographystyle{WileyNJD-AMA}
\bibliography{References}

\end{document}